\newtheorem{theorem}{Theorem}[section]
\newtheorem{lemma}[theorem]{Lemma}
\titleformat{\section}{\normalfont\fontsize{12}{12}\bfseries}{\thesection}{}{}
\titlespacing\section{0pt}{10pt plus 2pt minus 2pt}{6pt plus 2pt minus 2pt}
\titleformat{\subsection}{\normalfont\fontsize{11}{11}\bfseries}{\thesubsection}{}{}
\begin{document}
\title{Quantifying topological invariants of neuronal morphologies}
\author{Lida Kanari}
\affiliation{Blue Brain Project, EPFL}
\email{lida.kanari@epfl.ch}
\author{Pawe{\l} D{\l}otko}
\affiliation{DataShape, INRIA Saclay, Ile-de-France}
\author{Martina Scolamiero}
\affiliation{Laboratory for Topology and Neuroscience at the Brain Mind Institute, EPFL}
\author{Ran Levi}
\affiliation{Institute of Mathematics, University of Aberdeen}
\author{Julian Shillcock}
\affiliation{Blue Brain Project, EPFL}
\author{Kathryn Hess}
\affiliation{Laboratory for Topology and Neuroscience at the Brain Mind Institute, EPFL}
\author{Henry Markram}
\affiliation{Blue Brain Project, EPFL}

\date{\today}

\begin{abstract}
Nervous systems are characterized by neurons displaying a diversity of morphological shapes. Traditionally, different shapes have been qualitatively described based on visual inspection and quantitatively described based on morphometric parameters. Neither process provides a solid foundation for categorizing the various morphologies, a problem that is important in many fields. We propose a stable topological measure as a standardized descriptor for any tree-like morphology, which encodes its skeletal branching anatomy. More specifically it is a barcode of the branching tree as determined by a spherical filtration centered at the root or neuronal soma. This Topological Morphology Descriptor (TMD) allows for the discrimination of groups of random and neuronal trees at linear computational cost. 
\end{abstract}

\keywords{topological analysis, neuronal, structures, trees, morphologies}

\maketitle


The analysis of complex tree structures, such as neurons, branched polymers \cite{Alexandrowicz:1985}, viscous fingering \cite{Oded:2002} and fractal trees \cite{Mandelbrot:1983, Garst:1990}, is important for understanding many physical and biological processes. Yet an efficient method for quantitatively analysing the morphology of such trees has been difficult to find. Biological systems provide many examples of complex trees. The nervous system is one of the most complex biological systems known, whose fundamental units, neurons, are sophisticated information processing cells possessing highly ramified arborizations \cite{Jan:2010}. The structure and size of neuronal trees determines the input sources to a neuron and the range of target outputs and is thought to reflect their involvement in different computational tasks \cite{Cuntz:2007, Ferrante:2013, Silver:2010}. In order to understand brain functionality, it is fundamental to understand how neuronal shape determines neuronal function \cite{Honey:2010}. As a result, much effort has been devoted to grouping neurons into distinguishable morphological classes \cite{DeFelipe:2013, Markram:2004, Halavi:2012, PING:2008}, a categorization process that is important in many fields \cite{Lyons:1999}.

Neurons come in a variety of shapes with different branching patterns (e.g., frequency of branching, branching angles, branching length, overall extent of the branches, etc). Classifying these different morphologies has traditionally focused on visually distinguishing the shapes as observed under a microscope \cite{Masseroli:1993}. This method is inadequate as it is subject to large variations between experts studying the morphologies \cite{DeFelipe:2013} and is made even more difficult by the presence of an enormous variety of morphological types \cite{Markram:2004}. An objective method of discriminating between neuronal morphologies could advance progress in generating a “parts-list” of neurons in the nervous system. For this reason, experts now generate a digital version of the cell’s structure - a 3D reconstructed model of the neuron \cite{Dieter:2000} that can be studied computationally. The reconstructed morphology is encoded as a set of points in $\mathbb{R}^3$ along each branch and edges connecting pairs of points. The reconstruction forms a mathematical tree representing the neuron’s morphology.

\begin{figure}[!ht]
   \includegraphics[width=0.45\textwidth]{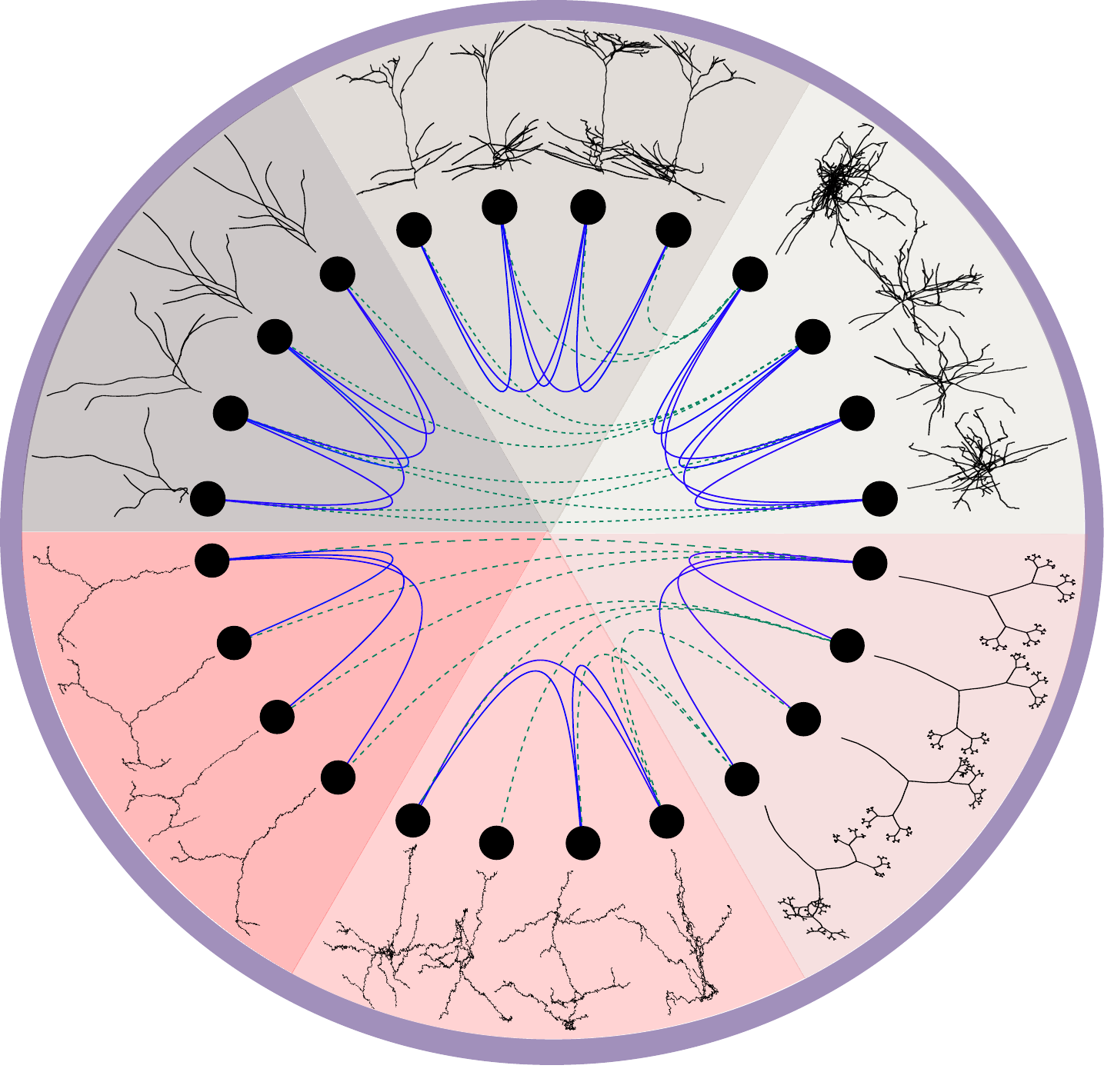}
 \caption{Illustration of the separation of similar tree structures into distinct groups, using topological analysis. The colored pie segments show six distinct tree types: three neuronal types (upper half) and three artificial ones (lower half). The thick blue lines show that our topological analysis can reliably separate similar-looking trees into groups. It is accurate both for artificially-generated trees and neuronal morphologies. The dashed green lines show that classification using an improper set of user-selected features cannot distinguish the correct groups.}
 \label{fig:grouping}
\end{figure}

In general, the properties of geometric trees, described by a set of points, have been rigorously studied in two extreme cases: in the limit of the full complexity of the trees \cite{Gunnar:2009}, where the entire set of points is used, and in the opposite limit of a feature space \cite{Gomez-Gil:2008, DeFelipe:2013}, where a (typically small) number of selected morphometrics (i.e., statistical features of the branching pattern) \cite{Blackman:2014, Wan:2015} are extracted from the digital morphology and form the input for standard classification algorithms. 

Topological data analysis (TDA) has been shown to reliably identify geometric objects that are built out of well-understood pieces, such as spheres, cylinders and tori, based on a point cloud sampled from the object \cite{Gunnar:2009}. It suffers, however, from the deficiency that reliable classification of complex geometric tree structures by standard TDA methods, such as Rips complexes \cite{Edelsbrunner:2008}, requires thousands of sampled points, which is expensive in terms of both computational complexity and memory requirements. 

Thus, feature extraction is the only currently feasible solution to establishing a more quantitative approach to morphologies\cite{Scorcioni:2008, Ling:2012, PING:2008}. While this approach has been efficiently used in image recognition \cite{Schurer:1994}, the wide diversity of neuronal shapes, even for cells identified as of the same type, has made it difficult to isolate an optimal set of features that reliably describe all neuronal shapes. Indeed, alternative sets of morphometrics result in different classifications \cite{DeFelipe:2013}, as illustrated in Fig ~\ref{fig:grouping}, because the statistical features commonly overlap even across markedly different morphological types. This in turn is because traditional feature extraction results in significant loss of information as the dimensionality of the data is reduced.





As a result, neither of these methods is suitable for the study of complex tree structures that describe physical or biological objects. 
The algorithm we propose here overcomes both of these limitations by constructing a Topological Morphology Descriptor (TMD) of a tree embedded in $\mathbb{R}^3$. A distance inspired by persistent homology \cite{Gunnar:2009} is defined to distinguish trees, because alternative methods for computing distances between trees, such as the \textit{edit distance} \cite{Bille_editingDistance}, the \textit{blastneuron distance} \cite{Wan:2015} and the \textit{functional distortion distance} \cite{Bauer:2014}, are in general computationally expensive (see SI,~\nameref{sec:dtree}). Our algorithm preserves the overall shape of the tree while reducing the computational cost by discarding the local fluctuations. It takes as input the set of key functional points in a tree \cite{Ferrante:2013}: the branch points, i.e., the nodes with more than one child, and the leaves, i.e., the nodes with no children and transforms them into a multi-set of intervals on the real line known as a \textit{persistence barcode} \cite{Gunnar:2009}, Fig ~\ref{fig:method}b. Each interval encodes the lifetime of each connected component in the underlying structure, delimited by the points at which a branch is first detected (birth) and when it connects to a larger subtree (death). This information can be equivalently represented in a \textit{persistence diagram}\cite{Gunnar:2009}, Fig ~\ref{fig:method}c in which the birth-death times are points in the real plane with the advantage that this structure simplifies the mathematical analysis of the data.



Our method is applicable to any tree-like structure and we demonstrate its generality by first applying it to a collection of artificial random trees, (see SI,~\nameref{sec:rtrees}), and to groups of neuronal trees (see SI,~\nameref{sec:data}). Our results show that the TMD of tree shapes is highly effective for reliably and efficiently distinguishing random and neuronal trees (Fig ~\ref{fig:grouping}).


\begin{figure}[!ht]
 \centering
   \includegraphics[width=0.49\textwidth]{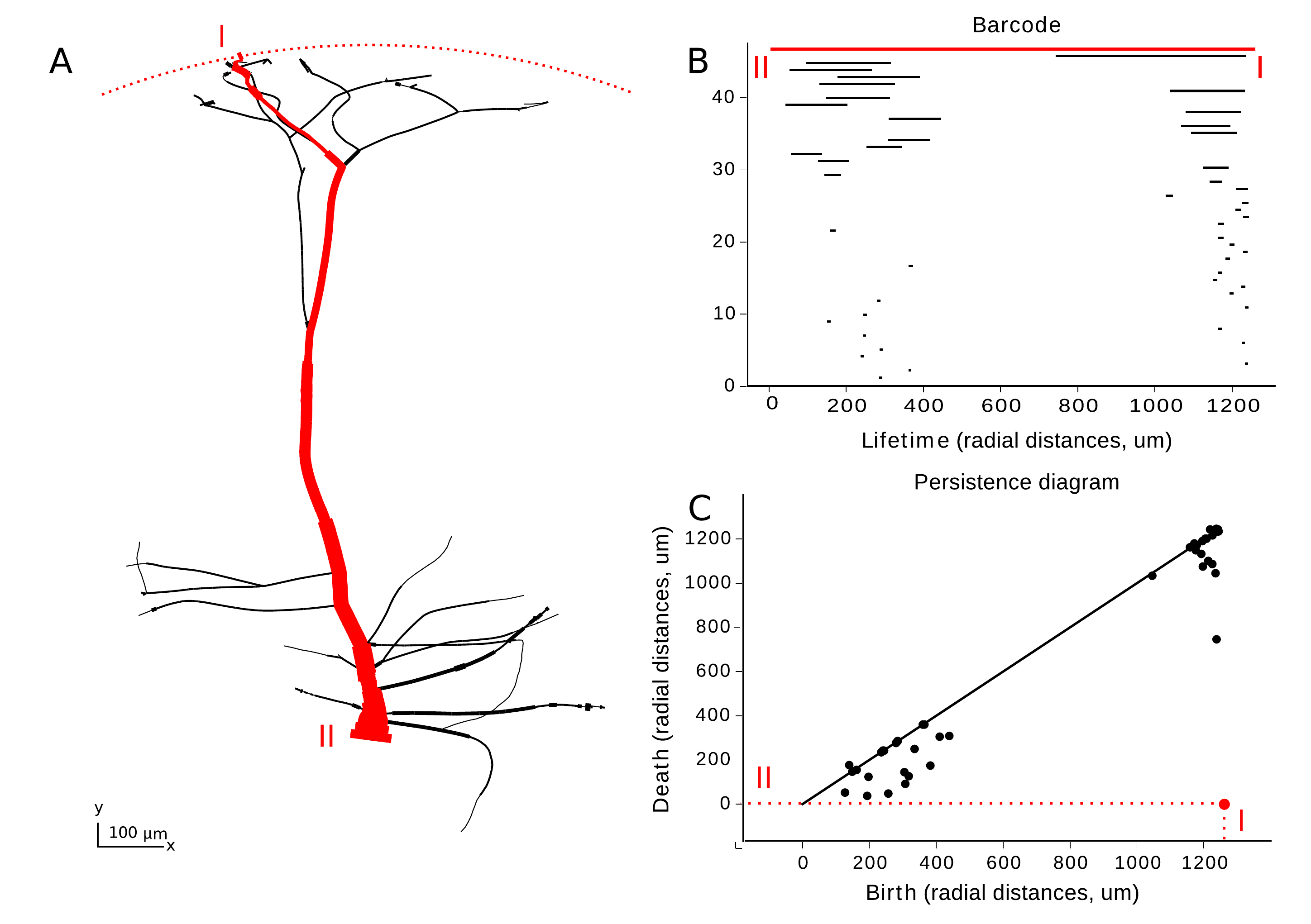}
 \caption{Application of topological analysis to a neuronal tree (A) showing the largest persistent component (red). The persistence barcode (B) represents each component as a horizontal line whose endpoints mark its birth and death in units that depend on the choice of the function $f$ used for the ordering of the nodes of the tree. In our case, it is radial distance of the nodes from the soma, so the units are microns. The largest component is again shown in red together with its birth (I) and death (II). This barcode can be equivalently represented as points in a persistence diagram (C) where the birth (I) and death (II) of a component are the X and Y coordinates of a point respectively (in red). The diagonal line is a guide to the eye and marks points whose birth and death are simultaneous. Note that the persistence barcode (B) has an arbitrary ordering of the components along the Y axis, which prevents the calculation of an average barcode. The corresponding persistence diagram (C), however, has physical coordinates on both axes, allowing an average persistence diagram to be defined.}
 \label{fig:method}
\end{figure}


\section{Methods} \label{methods}

The operation $\mathcal{F}$ to extract the barcode $D(T)$ of an embedded tree $T$ is described in Algorithm 1. Let $T$ be a rooted, and therefore oriented, tree \cite{Knuth:1998}, embedded in $\mathbb{R}^3$. Note that the operation described here is in fact generalizable to trees embedded in any metric space. We denote by $N := B \cup L$ the set of nodes of $T$, which is the union of the set of branch points $B$ and the set of leaves $L$. In the case of a neuron, the root $R$ is the point representing the soma. Each node $n \in N$ has references to its parent, i.e., the first node on the path towards the root and to its children. Nodes with the same parent are called siblings. 

Let $f$ be a real-valued function defined on the set of nodes $N$. For the purpose of this study $f$ is the radial distance from the root $R$. Another possible choice would be the path distance from $R$. Based on the function $f$, we construct a real-valued function $v$, defined on the set of nodes $N$ as the largest value of the function $f$ on the leaves of the subtree starting at $n$. An ordering of the siblings can be defined based on $v$: if $n_1, n_2 \in N$, are siblings and $v(n_1)<v(n_2)$, then $n_1$ is younger than $n_2$.


The algorithm is initialized by setting the value of $v(l), l \in L$ equal to the value of $f(l)$. Following the path of each leaf $l \in L$ towards the root $R$, all but the oldest (with respect to $v$) siblings are killed at each branch point. If siblings are of the same age it is equivalent to terminate either one of them. For each killed component one interval (birth-death) is added to the persistence barcode (Fig ~\ref{fig:method}). The oldest sibling survives until it merges with a larger component at a subsequent branch point. This operation is applied iteratively to all the nodes until the root is reached. At this point only one component, the oldest one, is still alive. 

\begin{algorithm}[H]
\begin{algorithmic}[1]
\Require{$T$ with $R$, $B$, $L$, $f: R \cup L \rightarrow \mathbb{R}$ and $v$}
\State{$D(T):$ empty list to contain pairs of real numbers}
\State{$A \gets L$} \Comment{$A:$ set of active nodes}
\For{every $l \in L$}
\State $v(l) = f(l)$
\EndFor
\While{$R \not \in A$}
\For{$l$ in $A$}
\State{$p:$ parent of $l$}
\State{$C:$ children of $p$}
\If{$\forall n \in C, n \in A$}
\State{$c_m:$ randomly choose one of $\{c \mid v(c) = \max_{c'}(v(c'))$ for $c' \in C\}$}
\State{Add $p$ to $A$}
\For{$c_i$ in $C$}
\State{Remove $c_i$ from $A$}
\If{$c_i  \neq c_m$}
\State{Add ($v(c_i)$, $f(p)$) to $D(T)$}
\EndIf
\EndFor
\State{$v(p) \gets v(c_m)$} 
\EndIf
\EndFor
\EndWhile
\State{Add ($v(R)$, $f(R)$) to $D(T)$}
\State{Return $D(T)$}
\end{algorithmic}
\caption{$\mathcal{F}:$ extracting persistence barcode from tree}\label{alg:topaz}
\label{alg:topaz}
\end{algorithm}

When all the branches are outgoing, i.e., the radial distance of the origin is smaller than the radial distance of the terminal point of the branch, the operation $\mathcal{F}$ is equivalent to a filtration of concentric spheres of decreasing radii, centered at $R$ (Fig ~\ref{fig:method}). In this case, the birth time of a component is the supremum of the radii of the spheres that do not contain the entire component. The death time is the infimum of the radii of the spheres that contain the branch point at which the component merges with a longer one. 

The computational complexity of this operation is linear in the number of nodes. Note that the $if$ statement in line 9 of the algorithm is critical for the linear complexity. The number of currently active children is saved at each parent node to avoid quadratic complexity.

This process results in a set of intervals on the real line, each of which represents the lifetime of one component of the tree. The operation $\mathcal{F}$ that associates a persistence barcode $D(T)$ to a tree $T$ is invariant under rotations and translations, as long as the function $f$ is also. For the purposes of this paper, $f$ is the radial distance from $R$ and as such it is invariant under rotations about the root and rigid translations of the tree in $\mathbb{R}^3$.




The most common topological metric that is used for the comparison of persistence diagrams is the \textit{bottleneck distance} \cite{Edelsbrunner:2008}, denoted $d_B$. The bottleneck distance is the infimum of the infinity norm $L_{\infty}$ (maximum distance) between matching points over all possible matchings of two persistence diagrams \cite{Edelsbrunner:2008}. 

We prove that the operation $\mathcal{F}: T \mapsto D(T)$ is stable with respect to standard topological distances, such as the Bottleneck distance (see SI,~\nameref{sec:stab1}). For $\epsilon$ modifications in the tree $T$ with respect to Hausdorff distance (see SI,~\nameref{sec:dspace}), the persistence diagram $D(T)$ is not modified more than $\mathcal{O}(L\epsilon)$, as long as function $f$ is $L$-Lipschitz. Note that this requirement is always satisfied (with $L=1$) by the radial distance. As a result, the method is robust with respect to small perturbations in the positions of the nodes and the addition/ deletion of small branches.

\begin{figure}[!ht]
 \centering
   \includegraphics[width=0.49\textwidth]{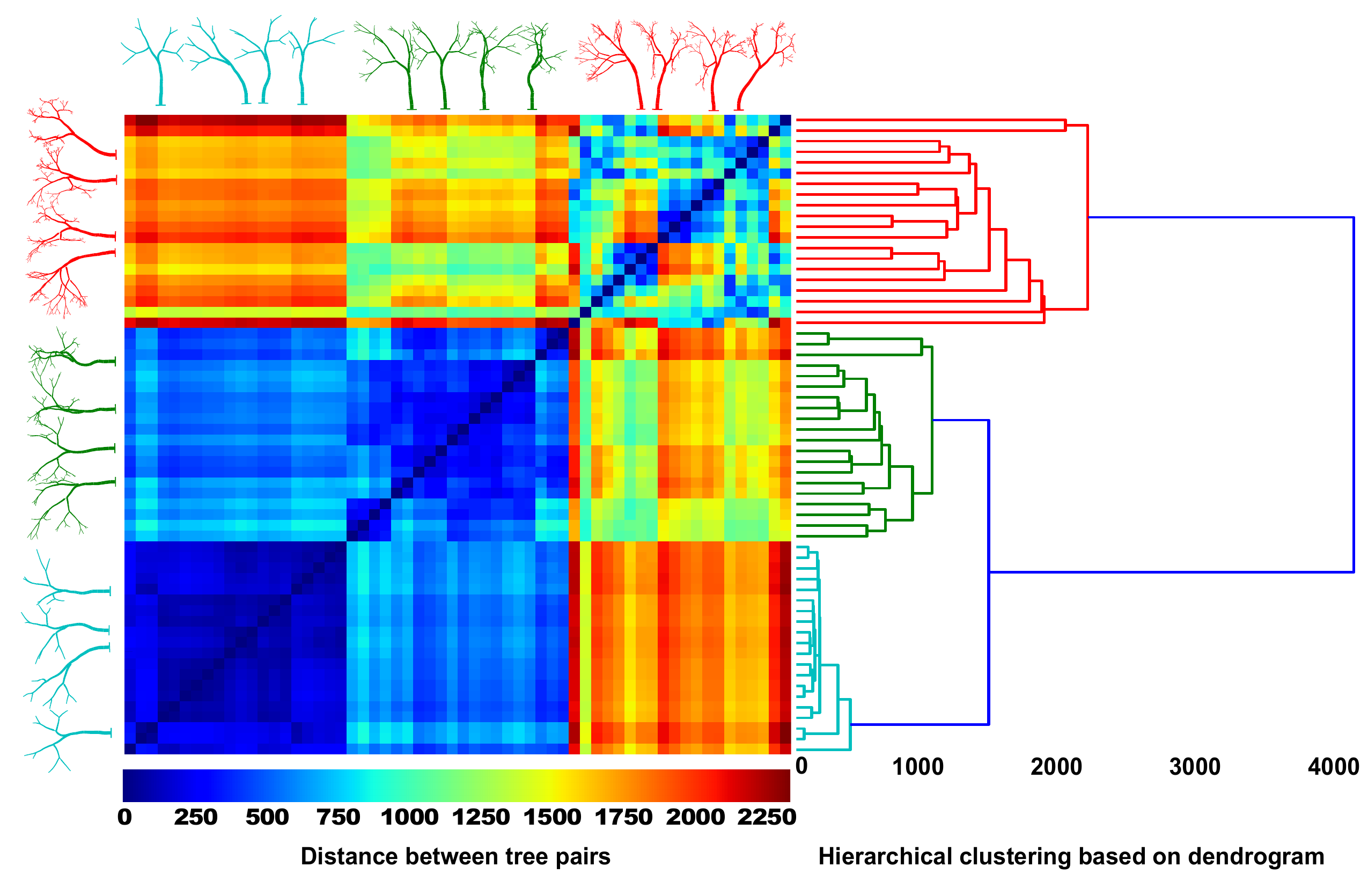}
 \caption{Application of topological analysis to a set of artificial trees generated using a well-defined stochastic process. Three sets of trees are shown in which each set differs from the others only in the tree depth. Each group contains 20 individuals of which only 4 are shown for clarity. Each individual is generated using the same tree parameters but a different random number seed. The TMD separates each group accurately using the distance metric between barcodes of the trees described in the main text. This is seen in the separation of colors in the colormap, and is supported by the clear distinction into three groups resulting from a simple hierarchical clustering algorithm [34].}
 \label{fig:randomtrees}
\end{figure}


However, none of the standard topological distances between persistence diagrams is appropriate for the grouping of neuronal trees. The bottleneck distance as well as distances based on the bottleneck, such as the \textit{persistence distortion distance} \cite{metric_distortion} (see SI,~\nameref{sec:dtree}) cannot distinguish diagrams that differ on the short components, which are nevertheless important for the distinction of neuronal morphologies. 

\begin{figure*}[!ht]
 \centering
   \includegraphics[width=0.95\textwidth]{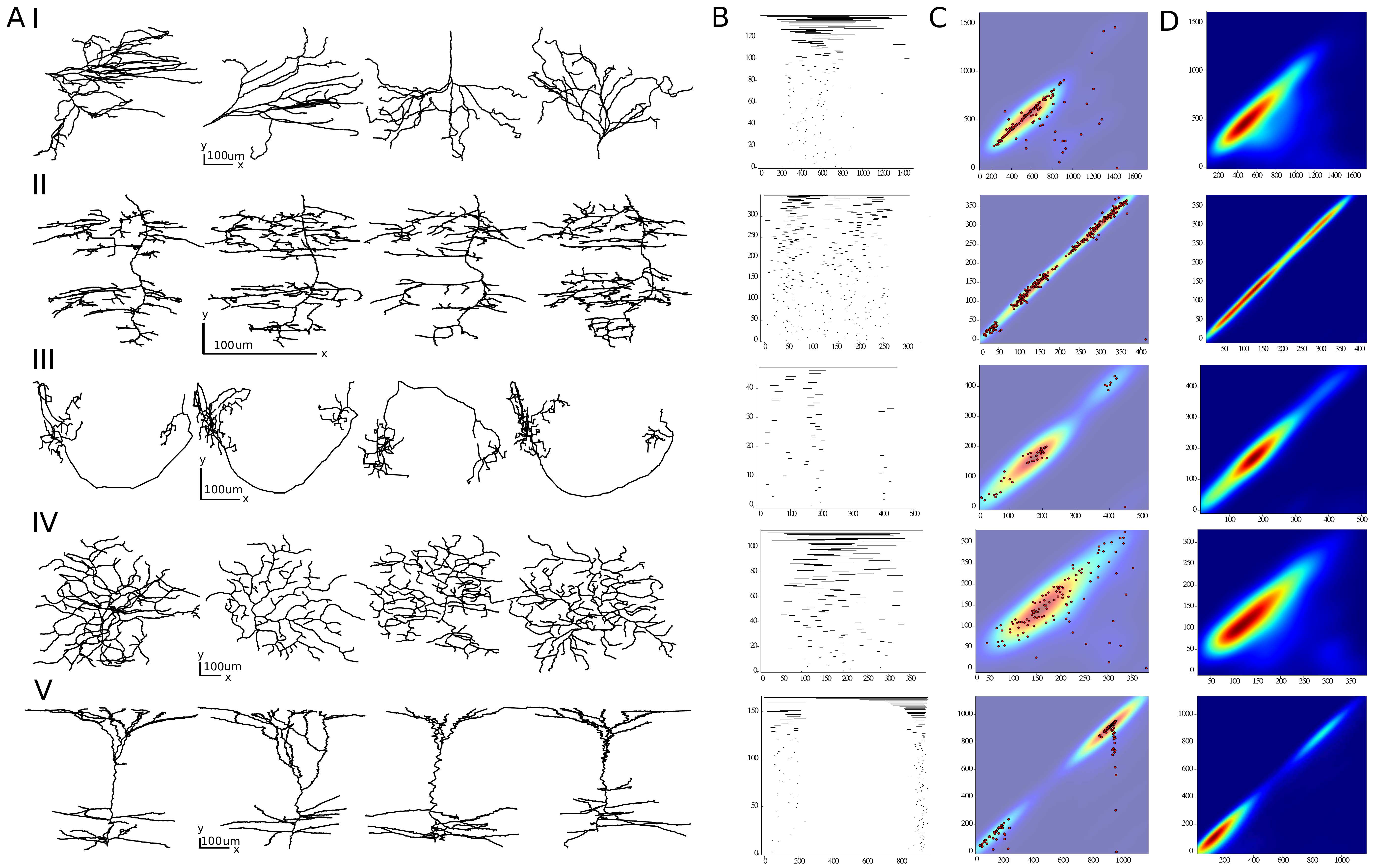}
 \caption{Topological comparison of neurons from different animal species. Each row corresponds to a species: (I) cat, (II) dragonfly, (III) drosophila, (IV) mouse and (IV) rat. Trees from several exemplar cells for each species are shown in the first column (A). Representative persistence barcodes for the cells in A are shown in the second column (B). The structural differences of the trees are clearly evident in these barcodes. II, III and V have clusters of short components, clearly distinct from the largest component, while I and IV have bars of a quasi-continuous distribution of decreasing lengths. Also, barcodes III, and V show empty regions between dense regions of bars, indicating the existence of two clusters in the morphologies, while barcodes I and IV are dense overall. The persistent image for each representative barcode in B and its superimposed persistence diagram are shown in the third column (C). By combining the persistence diagrams in (C) for several tress we can define an average persistent image (D) in order to study the qualitative differences between sets of morphologies. The trees in the first row (cat) are more tightly grouped than those in the second row (dragonfly), and two clusters are visible in the dragonfly trees. Considering rows 1 and 4, the extension of the elliptical peak perpendicular to the diagonal line reflects the variance in the length-scale mentioned earlier for a single cell's barcode. Note that the trees, barcodes, and persistent images are not shown to the same scale: see the scale bar in each case. }
 \label{fig:species}
\end{figure*}

We therefore define an alternative distance $d_{Bar}$ in the space of the barcodes that we use to classify branching morphologies. For each barcode we generate a density profile as follows: $\forall x \in \mathbb{R}$ the value of the histogram is the number of intervals that contain $x$, i.e., the number of components alive at that point. The distance between two barcodes $D(T_1)$ and $D(T_2)$ is defined as the sum of the differences between the density profiles of the barcodes. This distance is not stable with respect to Hausdorff distance, but it is the only distance we are aware of that succeeds in capturing the differences between distinct neuronal persistence barcodes.


Finally, the persistence diagram can be converted into a \textit{persistent image}, as introduced in \cite{Chepushtanova:2015}. This method is based on the discretization of a sum of Gaussian kernels \cite{Scott:2008}, centered at the points of the persistence diagram. This discretization can generate a matrix of pixel values, encoding the persistence diagram in a vector, called the persistent image. Machine learning tools, such as decision trees and support vector machines can then be applied to this vectorization for the classification of the persistence diagrams.


\section{Results}
We demonstrate the discriminative power of the TMD by applying it to three examples of increasing complexity. First, we applied it to artificial random trees that provide a well-defined test case to explore the method's performance because they are generated by a stochastic algorithm (described in ~\nameref{sec:rtrees}) and therefore have properties that can be precisely modified. Next, we analyzed two datasets of more biological relevance: neurons from different species, downloaded from \cite{Ascoli:2007}, and distinct types of trees obtained from several morphological types of rat cortical pyramidal cells (see SI,~\nameref{sec:data}). This last example is interesting because, although there is biological support for their separation into distinct groups, no consistent objective classification has previously been performed. The persistence diagrams of trees from all these example datasets can be classified correctly with high accuracy, indicating good performance of the algorithm across non-trivial examples. 


Mathematical random trees are defined by a set of parameters that constrain their shape: the tree depth $T_d$, the branch length $B_l$, the bifurcation angle $B_a$ and the degree of randomness $D_r$ (see SI,~\nameref{sec:genrtrees}). We defined a control group as a set of trees generated with predefined parameters ($T_d=5$, $B_l=10$, $B_a=\pi/4$, $D_r=10\%$) but independent random seeds. Each parameter was varied individually to generate groups of trees that differed from the control group in only one property. For all trees we extracted the persistence barcode using Algorithm 1. A comparison of the distances $d_{Bar}$ between each tree's barcode and the barcodes of the trees in every groups constitutes one trial. The trial is successful if the minimum barcode distance occurs for a tree and the group generated with the same parameters.

The TMD method accurately separates groups of random trees that have different \emph{tree depth} as seen in Fig~\ref{fig:randomtrees}, with an accuracy of $99\% \pm 2\%$ (see SI,~\nameref{sec:rtresults}). The distance matrix in Fig ~\ref{fig:randomtrees} shows three distinct groups and a simple hierarchical clustering algorithm \cite{Ward:1963} correctly assigns each tree to its group. The TMD’s effectiveness is demonstrated by its accuracy in assigning individual trees to their originating group. The TMD distinguishes random trees generated by varying the parameters $T_d$, $B_a$, $B_l$, $D_r$ with an accuracy of $99 \%$, $94 \%$, $99 \%$ and $77 \%$ respectively. Further results are given in the ~\nameref{sec:rtresults}. 


Next, the algorithm is used to quantify differences between neuronal morphologies. Neurons that serve distinct functional purposes exhibit unique branching patterns. The morphologies of neuronal trees from different species, as seen in Fig ~\ref{fig:species}, can be classified into different groups. For the purposes of this study, we used cat, drosophila, dragonfly, mouse and rat neurons. The qualitative differences between the neuronal tree types are evident from the individual geometrical tree shapes (Fig ~\ref{fig:species}A) as well as the extracted barcodes (Fig ~\ref{fig:species}B).

\begin{figure}[!ht]
 \centering
   \includegraphics[width=0.49\textwidth]{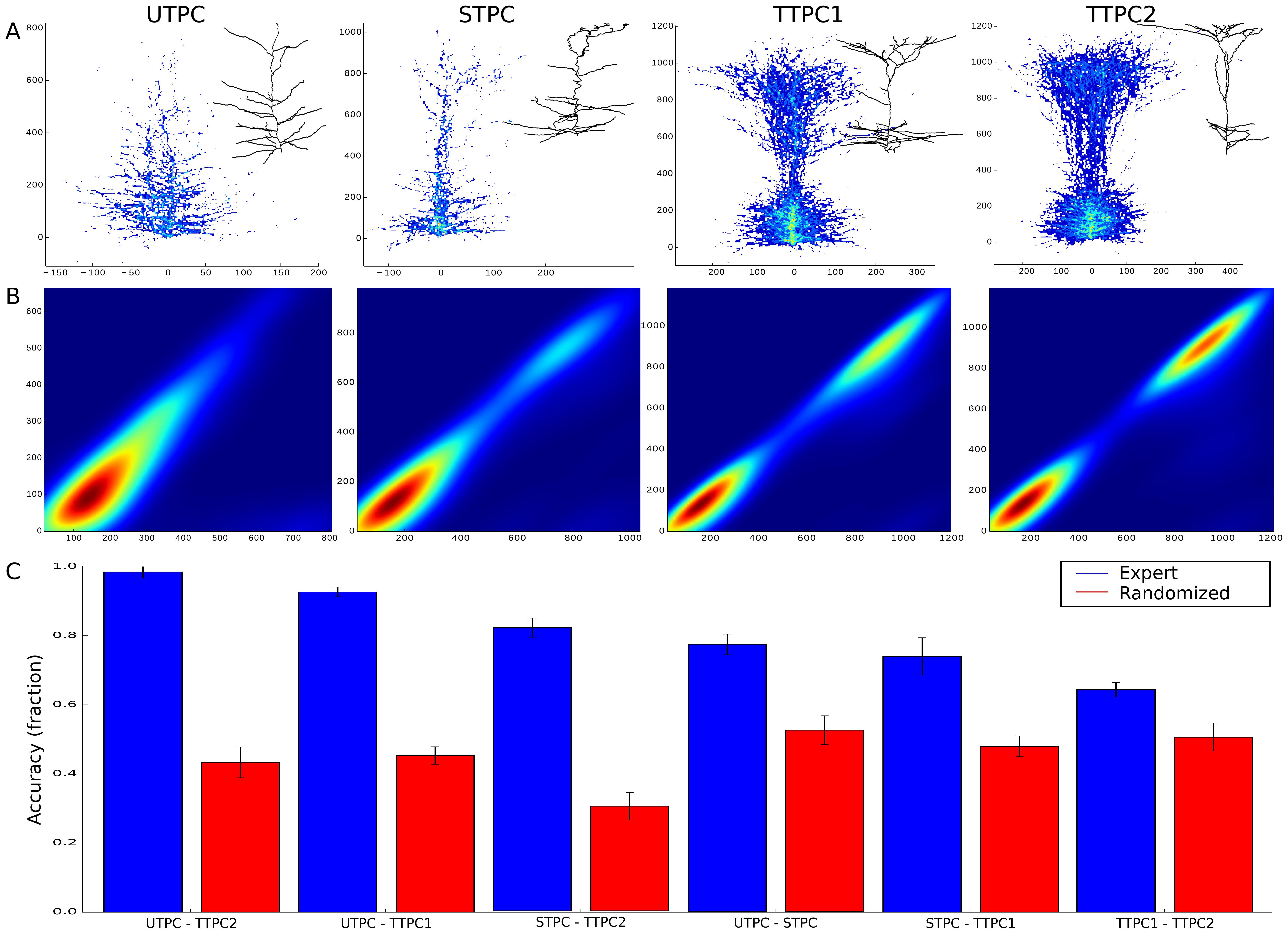}
 \caption{Comparison of the topological analysis of apical dendrite trees extracted from several types of rat neuron. Four cell types are shown in (A): UTPC, STPC, TTPC1, TTPC2 (left to right). The morphological differences between these cell types are subtle, but the persistent images (B) clearly reveal them, particularly the presence of two clusters in the TTPC1 and TTPC2 cell types. From these persistent images we train a decision tree classifier on the expert-assigned groups of cells. Then we perform binary classification on trees taken from each pair of groups (C). The blue bars show that the performance of the classifier using the expert-assigned groups in order of decreasing accuracy (98\%, 93\%, 82\%, 77\%, 74\%, 64\%). The red bars are obtained from a random assignment of trees to groups and are significantly worse than the expert assignment. This shows that the persistent images objectively support the expert's classification.}
 \label{fig:pcs}
\end{figure}


From the superposition of persistence diagrams for a set of trees, we constructed the persistent images of the group (Fig ~\ref{fig:species}D). Note that the persistence barcode has an arbitrary ordering of the tree’s components along the Y-axis, which prevents the calculation of an average barcode. The corresponding persistence diagram, and the associated persistent image, have physical coordinates on both axes, allowing an average persistence diagram to be defined. This representation is useful for quantifying the differences between tree types. It is trivial to identify in Fig ~\ref{fig:species}D areas of high branching density at different radial distances from the soma by the position of the clusters. Since density is usually correlated with connection probability, we can identify the anatomical parts of the trees that are important for the functionality and the connectivity patterns of different cell types. 

Finally, we applied the TMD to a case for which it is difficult for a non-expert to distinguish the morphologies. While pyramidal cell (PC) morphologies (Fig~\ref{fig:pcs}A) of the rat appear superficially similar, the persistent images (\ref{fig:pcs}B) reveal fundamental morphological differences between the four neuronal types, related to the existence and the shape of the apical tuft. The apical tuft of PCs is known to play a key role in the integration of neuronal inputs through their synapses in higher cortical layers, and is therefore a key indicator for the functional role of the cell. Therefore, the TMD can successfully discriminate between these functionally distinct neuronal types. 




\section{Discussion}

The morphological diversity of neurons supports the complex information processing capabilities of the nervous system. A major challenge in neuroscience has therefore been to objectively classify the neurons by shape. Long-standing techniques, such as feature extraction, have so far been unable to reliably distinguish cell types, without expert manual interaction, in a reasonable computational time. We have introduced here a Topological Morphology Descriptor derived from principles of persistent topology that retains enough topological information about trees to be able to distinguish them in linear computational time. Its ability to efficiently separate neurons into distinct groups with high accuracy provides objective support for expert-assigned neuronal classes, and may lead to a more comprehensive understanding of the relation between neuronal morphology and function.

This technique can be applied to any rooted tree equipped with a Lipschitz function defined on its nodes. Further biological examples include botanic trees and the roots of plants. The method is not restricted to trees in $\mathbb{R}^3$, however, but can be generalized to any subset $T$ of a metric space $M$, with a base-point $R$. A persistence barcode can then be extracted using a filtration by concentric spheres in $M$ centered at $R$, enabling us to efficiently study the shape of complex multidimensional objects. Another generalization is to use a different function $f$, such as the path distance from the root, which should serve to reveal shape characteristics that are independent of the radial distance and thus not captured by the current approach.

While the static neuronal structures presented in this paper are biologically interesting themselves, our method can also be used to track the morphological evolution of trees. The topological study of the growth of an embedded tree could be addressed through Multidimensional Persistence \cite{Carlsson:2009}, as the spherical filtration identifying relevant topological features of the tree could be enriched with a second filtration representing temporal evolution. This application could be useful in agriculture to study growing roots \cite{Wang:2009} and trees, and would also extend the method to neurons in the developing brain.

\section{Acknowledgments}

LK and JCS were supported by Blue Brain Project. P.D. and R.L. were supported in part by the Blue Brain Project and by the start-up grant of KH. Partial support for P.D. has been provided by the Advanced Grant of the European Research Council GUDHI (Geometric Understanding in Higher Dimensions). MS was supported by the SNF NCCR "Synapsy".

\section{Author contributions}
LK and KH conceived the study. LK developed the topological algorithm with the contribution of PD. PD, KH contributed topological ideas used in the analysis. HM suggested the biological datasets analyzed. RL, MS and JCS gave conceptual advice. LK, PD, MS, JCS, KH and HM wrote the main paper. All authors discussed the results and commented on the manuscript at all stages.

The authors declare no competing financial interests.

\clearpage

\onecolumngrid

\section*{\centering{Supplementary information}}

\section{Demonstration of the algorithm} \label{sec:alg}

\begin{figure}[!ht]
 \centering
   \includegraphics[width=0.8\textwidth]{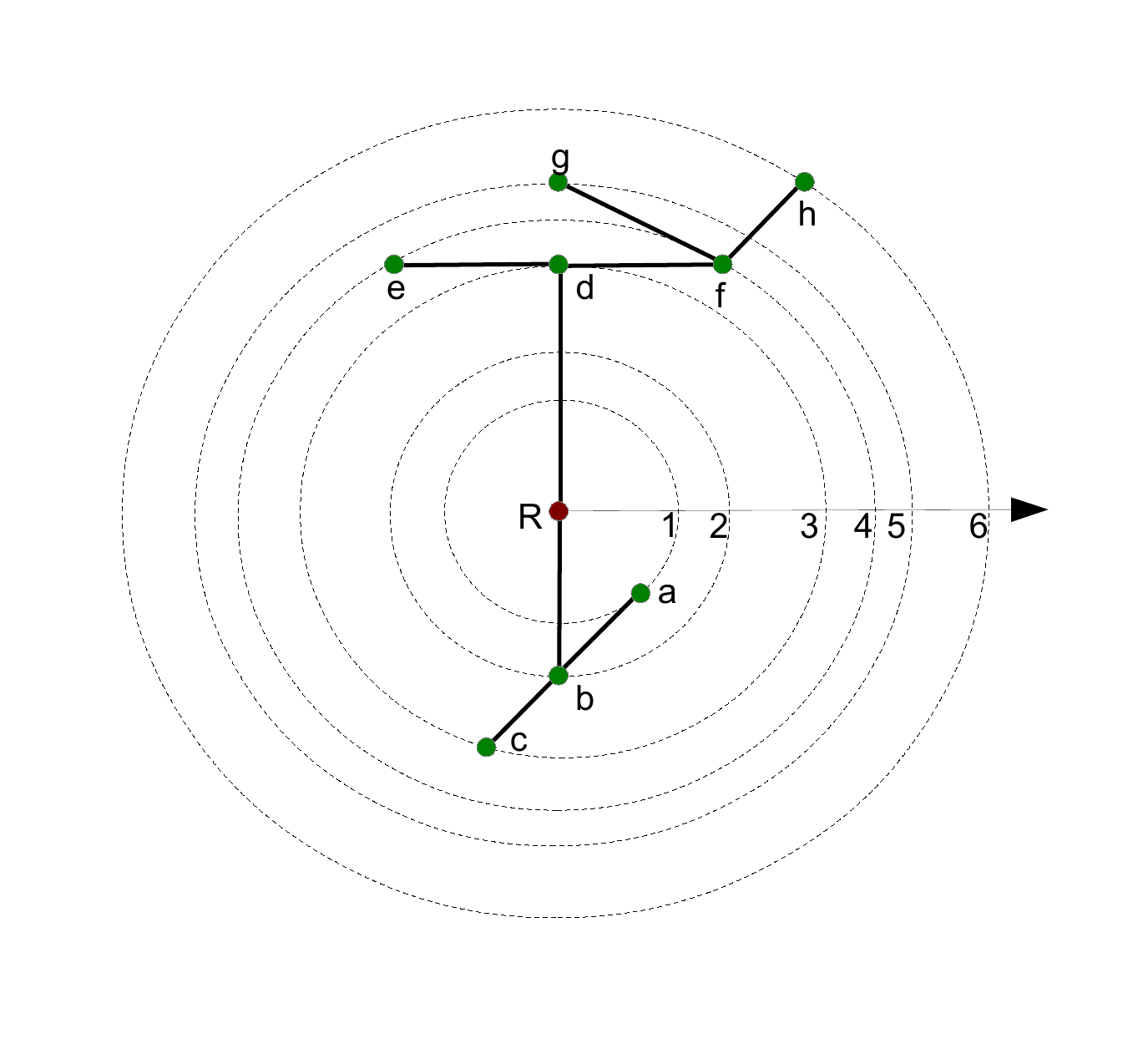}
 \caption{Illustration of the Algorithm~1.}
 \label{fig:algorithm}
\end{figure}

The idea of Algorithm~1 is presented in Figure~\ref{fig:algorithm}. The root, denoted by $R$, is shown in red, while the other nodes of the tree are labelled $a-h$. During the initialization of the algorithm, the leaves $(a,c,e,g,h)$ are inserted into the list of alive nodes, $A$. The algorithm then iterates over the members of $A$, as follows. 

If the node $a$ is the first element of the list $A$, then the first step of the algorithm begins by finding its parent, $b$. Since both $a$ and $c$ are in $A$, the algorithm then computes $c_m = c$ and therefore $v(b) = 3$. The interval $[1,2]$ is added to the list of intervals $D(T)$ representing the lifetime of the node $a$. Nodes $a$ and $c$ are removed from $A$, and $b$ is added to $A$. 
If the next vertex in the list $A$ is $e$, the algorithm begins the next step by finding its parent, $d$. This node is not processed further at this stage, however, since its sibling, $f$, is not in $A$. The next node to be processed is therefore $g$. Both children $g$ and $h$ of its parent $f$ are in $A$. The oldest component is $c_m = h$ and therefore $v(f) =6$. The interval $[5,4]$ is added in $D(T)$ to represent the lifetime of node $g$. The node $f$ is added to $A$, and both $g$ and $h$ are removed from $A$. The list of alive components then consists of $b,e,$ and $f$. The node $b$ cannot be processed since its sibling $d$ is not in $A$. The next node to be processed is therefore $e$, whose parent $d$ has all of its children in $A$. In this case, $c_m = f$, and $v(d) = v(f) = 6$. The interval $[4,3]$ is added to $D(T)$, while nodes $e$ and $f$ are removed from $A$, and $d$ is added to $A$. The next node in $A$ is $b$, whose parent is $R$. Since both children of $R$ are now in $A$, we compute $c_m = d$, $v(R) = v(d) = 6$,  and add the interval $[3,2]$ to $D(T)$. The  root $R$ is added to $A$, and $b$ and $d$ are removed from the list. Since now only the root $R$ is alive, the while loop in the algorithm terminates. The last step adds the interval $[6,0]$ to $D(T)$.

\section{Definition of Distances} 

\subsection{Distances between metric spaces} \label{sec:dspace}

\label{sec:hausdorff_distance}

In order to analyze the stability of Algorithm~1, we need to define a notion of distance between trees embedded in a metric space, as well as of distance between persistence diagrams. 

We will use the standard Hausdorff distance, defined as follows, to measure the distance between trees. Given $T_1, T_2$, trees in embedded in a metric space with  metric $d$, the Hausdorff distance between $T_1$ and $T_2$ is given by the formula:

\[ d_H(T_1,T_2) = max( sup_{x \in T_1} inf_{y \in T_2} d(x,y) ) , sup_{x \in T_2} inf_{y \in T_1} d(x,y) ). \]

\subsection{Distances between persistence diagrams} \label{sec:dph}

\label{sec:pers_distances}

Below we recall various representations of persistence diagrams and the possible notions of distance between them. We also provide a reference to software that computes the distances considered if such exist. All of the metrics summarized below can be applied directly to the output of Algorithm~1.

The most classical distances used in topological data analysis are the bottleneck and Wasserstein distances. Given two
 persistence diagrams $D_1$ and $D_2$, we may assume without loss of generality that they both contain the points from the diagonal with infinite multiplicity. We construct a matching (i.e., a bijection) $\phi : D_1 \rightarrow D_2$ and define two numbers 

 $$B_{\phi} = \sup_{x \in D_1}{ d(x,\phi(x)) }$$

 and 

 $$W^p_{\phi} = ( \sum_{x \in D_1} d(x,\phi(x))^p )^{\frac{1}{p}}, $$

\noindent where $d$ is the standard euclidean distance in $\mathbb{R}^2$. Note that $B_{\phi}$ is simply the longest distance that $\phi$ shifts a point in $D_1$, while $(W^p_{\phi})^p$ is a sum of $p$-th powers of lengths of the line segments joining $x$ and $\phi(x)$, for all $x$. The infimum of $B_{\phi}$ over all possible matchings is the bottleneck distance between $D_1$ and $D_2$. The infimum of $W^p_{\phi}$ over all possible matchings is the $p-$Wasserstein distance between $D_1$ and $D_2$. Given two persistence diagrams $D_1$ and $D_2$, their bottleneck distance will be denoted by $d_B(D_1,D_2)$ and their $p-$Wasserstein distance by $W_p(D_1,D_2)$.  One implementation of these distances can be found in~\cite{dionysus} and a faster approximation in~\cite{Kerber:2016}. Please follow the link in the paper for the implementation.
 
A persistence diagram can also be represented by a persistence landscape, i.e., a piecewise linear function $f : \mathbb{R} \times \mathbb{N} \rightarrow \mathbb{R}$. Given two persistence landscapes, we can compute the distance  between them in $L^p$ space ~\cite{Bubenik:2015}. The implementation is described in~\cite{Dlotko:2015}.

One can also represent persistence diagrams by persistence images, which were introduced in \cite{Chepushtanova:2015}. The idea is to place a Gaussian kernel at every point of the diagram and to discretize the distribution obtained in a pixel-based image. It is then straightforward to compute a distance between two persistence images, using common image-recognition techniques. This representation is used in the classification of morphological types of neurons in the experimental section of this paper. We are not aware of a publicly available implementation of this approach. An implementation is provided with the software of this paper. 

\subsection{Distances between trees}\label{sec:dtree}

A classic metric to compare trees, the \textit{edit distance} \cite{Bille_editingDistance}, is based on the transformation of one tree $T_1$ into another $T_2$ by a sequence of operations (deletion and insertion of vertices), each of which has a non-negative cost. The edit distance \cite{Bille_editingDistance} between $T_1$ and $T_2$ is defined to be the supremum of the total cost of all possible transformations from $T_1$ to $T_2$. However, the edit distance is not relevant to our problem, since it does not involve geometric information about the tree structure and  is known to be NP-complete \cite{Feragen:2012}.  

A metric that is useful for distinguishing neuronal trees was proposed in \cite{Wan:2015}. A set of morphological features is extracted from the trees, and the initial estimation of the distance between them is defined by the distance between the extracted features. An alignment algorithm is then applied to pairs of trees in order to identify local similarities. The local alignment requires the comparison of all pairs of branches, making the computation very expensive. This method is designed for the efficient matching of trees with highly similar structures, but the high variability within the groups of rat cortical neurons does not allow similar trees to be grouped together by local alignment, since local structures are often altered. 

Another important class of distances between trees are the distances between merge trees introduced in \cite{Beketayev:2014} and \cite{Morozov:2013}, which are applied to merge trees of sublevel sets of functions. For a function $f : X \rightarrow \mathbb{R}$, where $X$ is a metric space, the sublevel set at level $a \in \mathbb{R}$ is $\{x \in X | f(x) \leq a\}$. The differences captured by merge trees are considerably more subtle than the differences captured by the persistent homology of the function's sublevel sets. The authors of \cite{Beketayev:2014} and \cite{Morozov:2013} provide examples of pairs of simple merge trees $T$ and $T'$ that have the same persistence diagrams, but that are a nonzero distance apart. It is clear that in this particular case, Algorithm~1 would provide the persistent homology of the sublevel sets of the function. Therefore, by rescaling $T$ and $T'$, the difference between the distances used in those papers and the distances used in the current paper can get arbitrarily large. 

Another relevant metric, the \textit{persistence distortion distance} \cite{metric_distortion} between two graphs $G_1$ and $G_2$, is expressed in terms of the shortest paths from every $v_1 \in G_1$ and from every $v_2 \in G_2$. Given this shortest path metric, the persistence distortion is defined as the minimal bottleneck distance between the zero-dimensional persistent homology modules of the superlevel sets of the distance functions $P(G_1,v_1)$ and $P(G_2,v_2)$.
The persistence diagrams obtained in the process of computing the persistence distortion distance are conceptually very close to the diagrams we get from Algorithm~1. In our case, we obtain a significant computational advantage from working with rooted trees, since there is always a unique path between every pair of vertices. Moreover the logical choice of initial vertices $v_1$ and $v_2$ from which to compute shortest paths is to take the root of the trees considered. In this case, the persistence diagram arising when computing the persistence distortion distance is the one we would get from Algorithm~1 when the function $v$ is the path distance from the root. 
The computational cost of the distortion distance is considerable in the general case, but linear in our case. However, since it is based on the bottleneck distance, it suffers from that metric's limitations, i.e., the shortest components, which are important for the neuronal morphologies, are not taken into account. The code to compute persistence distortion distance is available here~\cite{persistence_distortion_soft}.


\textit{Functional distortion distance} was first introduced in \cite{Bauer:2014} as a distance to compare Reeb graphs. Since this distance, more generally, compares graphs equipped with a function defined on edges, we will consider it for neural tree comparison. 

Let $T$ be a tree, viewed as a CW-complex, with a real-valued function $g\colon T \rightarrow \mathbb{R}$ that is monotone on edges. Following \cite{Bauer:2014} we give $T$ a metric space structure. For any two points  $u,v$ in $T$ (not necessarily vertices), let $\lambda$ be the unique path between $u$ and $v$. The height of $\lambda$ is defined as $\text{height}(\lambda):=\text{max}_{x \in \lambda} g(x) - \text{min}_{x \in \lambda} g(x)$. In particular, if $e=(v_0,v_1)$ is an edge in $T$, since $g$ is monotone on edges, $\text{height}(e)=|g(v_0)-g(v_1)|$.
The distance between two points $u$ and $v$ in $T$ is defined as $d_f(u,v):=\text{height}(\lambda)$.
The function $d_f$ defines a metric on $T$ whenever there is no path of constant function value between two points $u \ne v$ in $G$. Otherwise, $d_f$ is a pseudo-metric.

Let $T$ and $T^{\prime}$ be two trees equipped with real-valued functions $f \colon T \to \mathbb{R}$ and $f^{\prime}\colon T^{\prime}\to \mathbb{R}$. For every pair of continuous functions $\phi: T\to T^{\prime}$ and $\psi \colon T^{\prime} \to T$, let
$$G(\phi,\psi):=\{(x,\phi(x))| x \in T\} \cup \{(\psi(y),y)| y \in T^{\prime}\}$$
and
$$\mathcal{D}(\phi,\psi):=\text{sup}_{(x,y),(\tilde{x},\tilde{y})\in G(\phi,\psi)}\frac{1}{2} | d_f (x,\tilde{x})- d_{f^{\prime} }(y,\tilde{y})|.$$

The functional distortion distance between $T$ and $T^{\prime}$ is defined by:
$$d_{FD}(T,T^{\prime})= \text{inf}_{\phi,\psi}\text{max}\{ \mathcal{D}(\phi,\psi),||f-f^{\prime}\circ \phi||_{\infty},||f \circ \psi -f^{\prime}||_{\infty}\}.$$

In Section 3.2 we will use the properties of functional distortion distance to prove that Algorithm 1 is robust with respect to specific types of perturbation in the input data.

\section{Proof of Stability}

\subsection{Bottleneck and Wasserstein stability of the barcode construction} \label{sec:stab1}

We prove now that the operation $\mathcal{F}$ to transform a tree $T$ into a persistence diagram $D(T)$ is robust under $\epsilon$ small perturbations. In the space of metric trees, we use the Hausdorff distance defined in Section~\pageref{sec:hausdorff_distance}. Since our input consists only of leaves and branch points, and their respective connectivity, when perturbing the tree, we perturb only those points. In the space of persistence diagrams we consider the bottleneck and Wasserstein distances defined in Section~\nameref{sec:pers_distances}.

As a result, the method is robust with respect to noisy data. Given a tree $T$, with a real-valued function $f$ defined on the set of nodes $N$, we construct a real-valued function $v$, defined on the set of nodes $N$ as the largest value of the function $f$ on the leaves of the subtree starting at $n$ (see Methods). In order to prove the robustness of the method, we require the function $v$ to be Lipschitz with a constant $L$. For the purpose of this study $f$ is the radial distance from the root $R$ and as a result $v$ is $1-$Lipschitz.

The proof of stability is carried out in three steps. First we prove stability in the case when the branch point's position can modified by $\epsilon$. We then consider the removal or addition of short branches. At the end, we unite these results to prove the main theorem. 

\begin{figure}[!h]
\begin{minipage}[t]{0.45\linewidth}
    \centering
    \includegraphics[trim=40 40 20 20, width=1\textwidth]{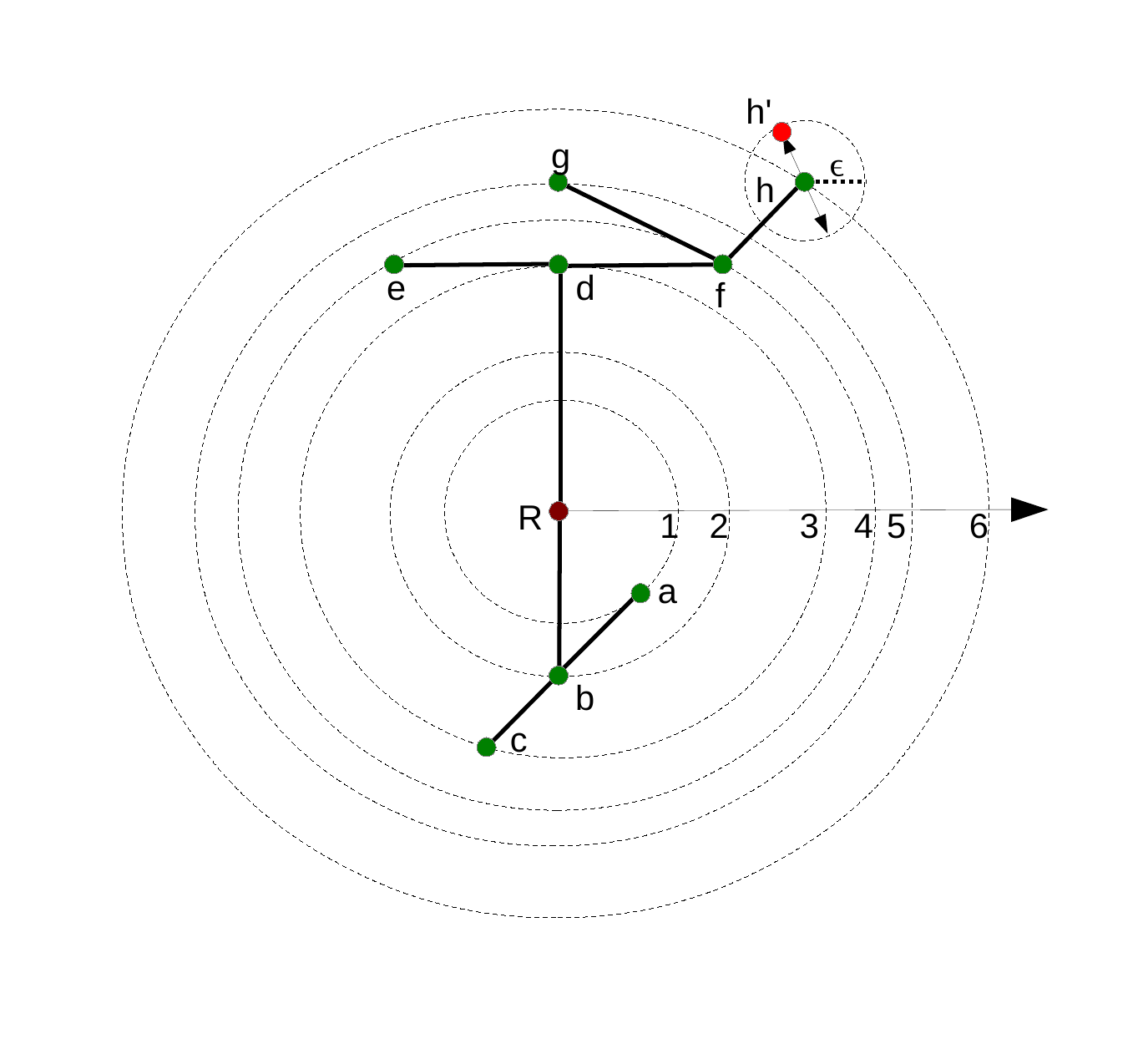}
    \caption{Perturbation of node position: h' has been moved $\epsilon$ from h.}
    \label{fig:pert1}
\end{minipage}
\hspace{0.2cm}
\begin{minipage}[t]{0.45\linewidth} 
    \centering
    \includegraphics[trim=30 20 20 20, width=1\textwidth]{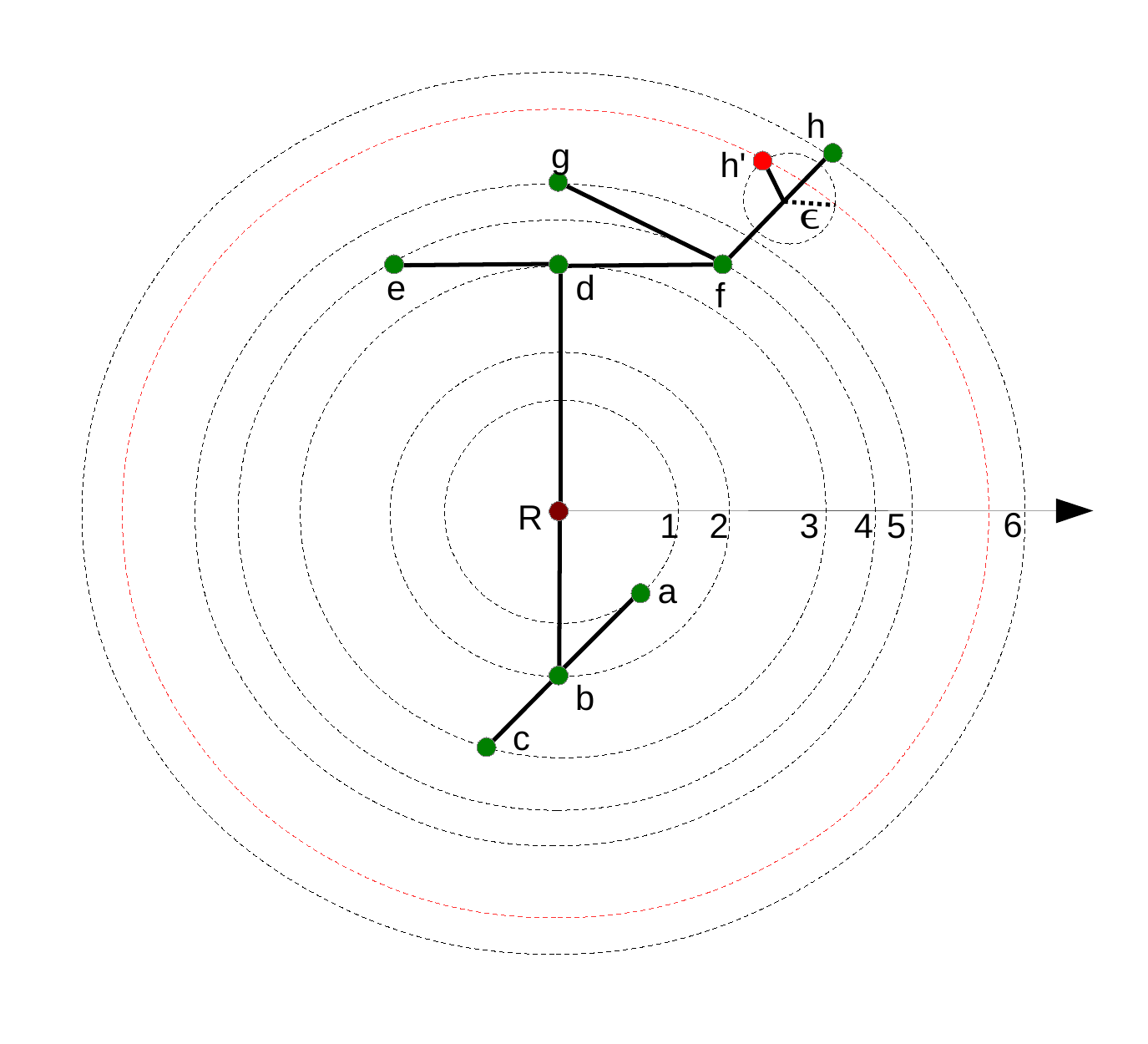}
    \caption{Addition of a small branch: h' has been added $\epsilon$ close to h.}
	\label{fig:pert2}
\end{minipage}
\end{figure}

We denote the perturbed tree as $T'$, with the same function $v$ defined on its nodes. We consider the following types of perturbations, for which $d_H(T,T') \leq \epsilon$.

\begin{enumerate}[label=(\alph*)]
\item  Perturbation of branch points' positions 

The location of a branch point is modified by at most $\epsilon$, as illustrated in Fig ~\ref{fig:pert1} for node $h$. The branching structure of the tree is preserved. In this case, the value of function $v$ of the node $h$ of tree $T$ will be modified by at most $L\epsilon$ in the node $h'$ of tree $T'$, since $v$ is L-Lipschitz. Since the connectivity has not been modified, the transformation of $T'$ to the persistence diagram $D(T')$, using Algorithm [1], will result in the same number of points. The difference between $D(T)$ and $D(T')$ will be in the position of the point(s) that depend on the modification of the node $h'$ and on $v(h')$. Assuming that the correspondence between the nodes of $T$ and $T'$ is known, all the points in $D(T')$ that have not been modified are matched with the corresponding points in $D(T)$. Since $|v(h)-v(h')| < L\epsilon$ the distance between the corresponding point(s) that have been modified in the persistence diagram will be at most $L\epsilon$. In this case the bottleneck distance between $D(T)$ and $D(T')$ is bounded by $L\epsilon$ and the Wasserstein distance in bounded by $cL\epsilon$, where $c$ is the number of points influenced by the modification of the node $h'$. 

We can generalize this concept to a perturbation of any number of nodes of the tree $T$. The maximum difference on the persistence diagrams can occur if both nodes of a component are moved by $\epsilon$ in opposite directions. In this case the distance between the two points is at most $2L\epsilon$. Therefore, the bottleneck distance between $D(T)$ and $D(T')$ is bounded by $2L\epsilon$. The $p-$th Wasserstein distance between $D(T)$ and $D(T')$ is bounded by $2n^{\frac{1}{p}}L\epsilon$, where $n$ is the number of leaves in $T$, which is equal to the number of points in $D(T)$.

Note that if the function $v$ depends on the location of the root, for example in our case $v$ is based on the radial distance $v$ from $R$, we need to double the Lipschitz constant in the analysis above. When a node of the tree is moved by at most $\epsilon$, the value of the function $v$ from the root may change by $2L\epsilon$ (respectively $2\epsilon$ in our case) since the position of both the root and any other node of a tree will be perturbed by at most $\epsilon$. Therefore the distance between the root and the node will change by no more than $2\epsilon$ and the value of the function $v$ on the node will change by no more than $2L\epsilon$.

%
%
\item Small branches are added to the tree. 

In this case, branches of at most length $\epsilon$ are added to the tree. The locations of the other nodes of the tree $T'$ remain fixed. In the simplest case, a single branch $h'$ is added to the component that contains node $h$, as presented in Fig ~\ref{fig:pert2}. This new branch will result in one more point in the corresponding persistence diagram. We create a matching between the points of $D(T)$ and $D(T')$. Assuming that the correspondence of the nodes of $T$ and $T'$ is known, all the points in $D(T')$ that have not been modified are matched with the corresponding points in $D(T)$. The new point will represent the termination of either the $h'$ or the $h$ component, depending on which node the value of function $v$ is smaller. In both cases, the new point will be matched with the diagonal at a distance less than $L\epsilon$ and therefore the bottleneck and the Wasserstein distances between $D(T)$ and $D(T')$ are bounded by $L\epsilon$.

We can generalize this concept to the addition of $m$ $\epsilon$ small branches to the tree $T$. In this case, $m$ new points will be added to the persistence diagram of $T'$. The rest of the points in $D(T')$ will be matched with the points of the corresponding branches in $D(T)$. Using the previous argument, the bottleneck distance between $T$ and $T'$ is bounded by $L\epsilon$. The $p-$th Wasserstein distance between $T$ and $T'$ is bounded by $2m^{\frac{1}{p}}L\epsilon$, where $m$ is the number of new at most $\epsilon$ long branches in $T'$ that are not in $T$.

\item Small branches are removed from the tree. 

Another perturbation is to remove small branches of length at most $\epsilon$ from the tree. This case is the inverse of the previous case. As a result, stability in this case is implied by the stability of the previous case.
\item Generalization of previous results. 

Finally, we discuss the most general case, where we can have perturbation of branch points' positions as well as addition and deletion of small branches. From the previous cases, using the triangle inequality, it is easy to show that the bottleneck distance between $T$ and $T'$ is bounded by $3L\epsilon$ and the $p-$th Wasserstein distance between $T$ and $T'$ is bounded by $3(m+n)^{\frac{1}{p}}L\epsilon$, where $m$ is the number of branches of length $\epsilon$ that are added or removed, and $n$ is the number of branch points in $T$.

\end{enumerate} 

\begin{figure}[!ht]
 \centering
   \includegraphics[width=0.95\textwidth]{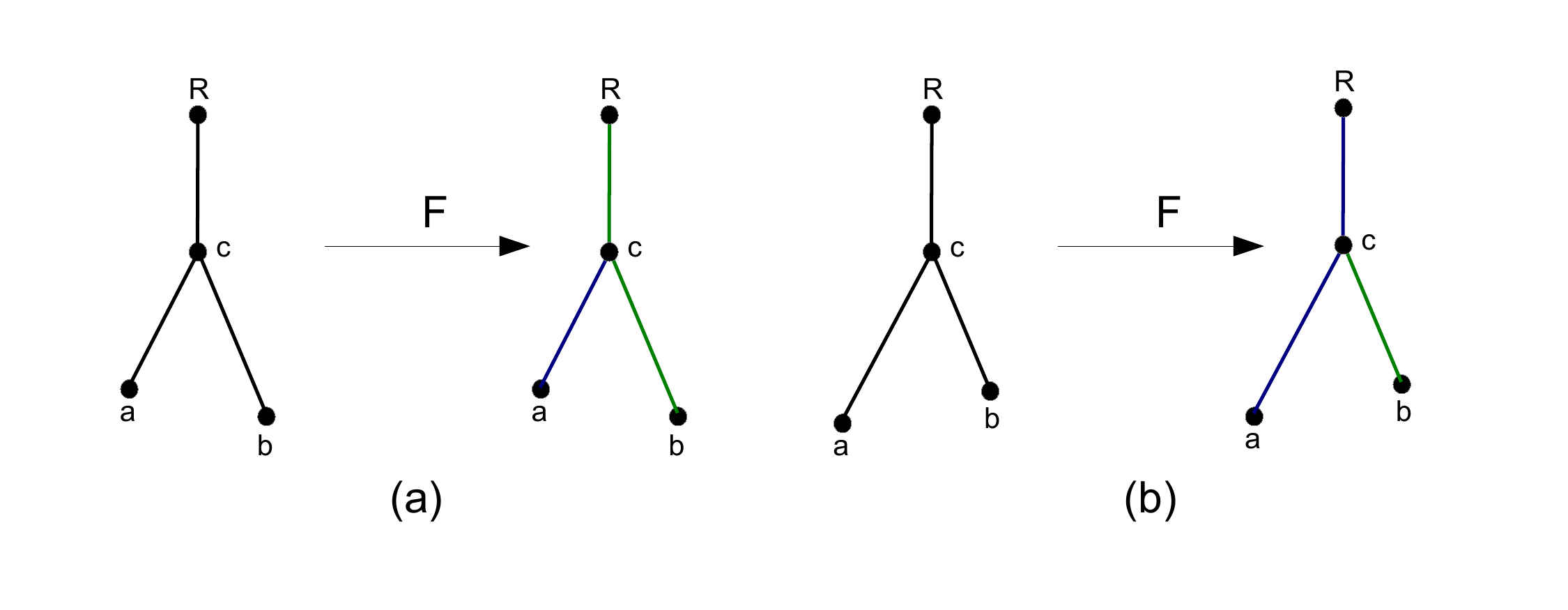}
 \caption{An illustration of a region of a tree corresponding to persistence intervals computed by Algorithm~1.}
 \label{fig:interval-decomposition-of-a-tree}
\end{figure}

Note that the persistence intervals introduced by Algorithm~1 also give rise to a decomposition of a tree into a distinct collection of branches. As an example, let us consider a tree in Figure~\ref{fig:interval-decomposition-of-a-tree}.a. The point $R$ denotes the root, $a$ and $b$ are leaves while $c$ is the only branch point in the tree. Function $v$ depends on $f$ which is the radial distance from the root $R$. In Figure~\ref{fig:interval-decomposition-of-a-tree}.a we assume that $b$ is slightly further away from $R$ than $a$. In this case, when Algorithm~1 processes the node $c$, the branch originating at $a$ will terminate at $c$ and the branch originating at $b$ will continue. This gives rise to a branch decomposition illustrated on the right corner. However, if the leaves $a$ and $b$ are slightly perturbed, as in Figure~\ref{fig:interval-decomposition-of-a-tree}.b, the resulting branch decomposition in the lower right corner will be very different from the previous case even though the persistence intervals will be very close for those two cases. This example illustrates that persistence diagrams are stable, but the corresponding branch decompositions are not. Since Algorithm~1 extracts the persistence diagram of a tree, the instability of the branch decomposition do not concern us for the current paper. 

\subsection{Stability proof based on functional distortion distance} \label{sec:stab2}

In this section we will use properties of the functional distortion distance $d_{FD}$ to prove that sampling errors in the input neural tree will not result in a significant difference in the associated persistence diagram. The space of persistence diagrams is equipped with the bottleneck distance $d_B$.

Recall that a neural tree is represented by a tree $T$ in $\mathbb{R}^3$ with a selected vertex $R$ called the root and a function $f \colon T \to \mathbb {R}$ associating to each vertex in $T$ its radial distance to the root. In Algorithm $1$ the tree structure is represented by the set of branch points, the set of leaves and their pairwise connectivity. In this section instead, in order to use the results from \cite{Bauer:2014}, the tree is viewed as a regular one-dimensional CW-complex. Intuitively we have glued an homeomorphic image of the interval $[0,1]$ between any two adjacent vertices in the data structure for Algorithm $1$.
In accordance with \cite{Bauer:2014} we denote by $Dg_0(T_f)$ the $0$-th persistence diagram of the sublevel set filtration of  $f \colon T \to \mathbb {R}$. The persistence diagram of the superset filtration of $f$ is equivalent to $Dg_0(T_{-f})$.


Let $T$ and $T^{\prime}$ be trees, equipped with functions $f \colon T \to \mathbb{R}$ and $f^{\prime} \colon T^{\prime} \to \mathbb{R}$.
Theorem $4.2$ of \cite{Bauer:2014} states that:  \[d_B(Dg_0(T_{f}),Dg_0(T^{\prime}_{f^{\prime}})) \leq d_{FD}(T,T^{\prime}).\]
 
The persistence diagram $D(T)$ obtained from Algorithm $1$ is equivalent to considering the disjoint union of the persistence diagrams $Dg_0(T_{f})$ and $Dg_0(T_{-f})$. In particular $Dg_0(T_{f})$ 
corresponds to points above the diagonal in $D(T)$ and identifies leaves that are closer to the root than their corresponding branch point. Points in $Dg_0(T_{-f})$ are instead under the diagonal in $D(f)$ and represent leaves that are farther from the root than their corresponding branch point. The diagonal of a persistence diagram is considered with infinite multiplicity, which implies that a matching between $D(T)$ and $D(T^{\prime})$ corresponds to a  pair: a matching of $Dg_0(T_{f})$ with $Dg_0(T^{\prime}_{f^{\prime}})$ and a matching of $Dg_0(T_{-f})$ with $Dg_0(T^{\prime}_{-f^{\prime}})$. It is therefore sufficient to consider the optimal matchings on the upper diagonal and lower diagonal part of the diagram given by Theorem $4.2$ to conclude that:
 \[d_B(D(T),D(T^{\prime})) \leq d_{FD}(T,T^{\prime}).\]
 
We will now use the property above to prove that the bottleneck distance between the persistence diagrams of trees in $\mathbb{R}^3$ is stable with respect to adding branches of length at most $\epsilon$ and perturbing the position of a branching point by $\epsilon$. 

\begin{lemma}
Let $T$ be a tree in $\mathbb{R}^3$ and $T^{\prime}$ a tree obtained by adding an edge of length at most $\epsilon$ to $T$, i.e. $T^{\prime}=T \cup_v e$.
If $f$ and $f^{\prime}$ are functions on $T$ and $T^{\prime}$ that associate to each point of a rooted tree its euclidean distance to the root, then $d_{FD}(T, T^{\prime} ) < \epsilon$.
\end{lemma}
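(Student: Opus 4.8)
The strategy is to bound $d_{FD}(T,T')$ directly from its definition by exhibiting one convenient pair of maps and estimating the three quantities $\mathcal{D}(\phi,\psi)$, $\|f-f'\circ\phi\|_\infty$ and $\|f\circ\psi-f'\|_\infty$. Write $T'=T\cup_v e$, where $e$ is the pendant edge of length at most $\epsilon$ attached at the point $v\in T$. Take $\phi\colon T\hookrightarrow T'$ to be the inclusion, which is legitimate because $T$ sits in $T'$ as a subcomplex, and take $\psi\colon T'\to T$ to be the retraction that is the identity on $T$ and sends every point of $e$ to $v$; both are continuous. As a preliminary step, subdivide $e$ (and, if necessary, edges of $T$) at the point where $f'$ is minimal on the edge, so that $f'$ is monotone on every edge of $T'$; this changes neither $T'$ as a metric space nor the induced $d_{f'}$, and is needed only so that $d_{f'}$ is the height (pseudo)metric appearing in the definition.

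The two sup-norm terms are essentially free. Since $\phi$ is the inclusion and $f,f'$ are both the Euclidean distance to the common root $R$, we have $f'\circ\phi=f'|_T=f$, hence $\|f-f'\circ\phi\|_\infty=0$. For the other term, $\psi$ is the identity on $T$, so $f\circ\psi=f=f'$ there; and for a point $p$ of $e$ one has $f(\psi(p))=f(v)=\|v-R\|$ while $f'(p)=\|p-R\|$, so the reverse triangle inequality in $\mathbb{R}^3$ gives $|f(\psi(p))-f'(p)|\le\|p-v\|\le\epsilon$, the length of $e$. Thus $\|f\circ\psi-f'\|_\infty\le\epsilon$.

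The real content is the estimate on $\mathcal{D}(\phi,\psi)$. The key observation is that every element of $G(\phi,\psi)$ has the form $(\psi(b),b)$ for some $b\in T'$: for a point $(x,\phi(x))$ this holds because $\phi(x)=x=\psi(x)$ when $x\in T$, and for a point $(\psi(y),y)$ it is immediate. Hence
\[
\mathcal{D}(\phi,\psi)=\sup_{b,\tilde b\in T'}\tfrac12\bigl|d_f(\psi(b),\psi(\tilde b))-d_{f'}(b,\tilde b)\bigr|.
\]
Now I would use two facts: (i) for $u,w\in T$ the unique $T'$-path between them equals the unique $T$-path, since attaching a pendant edge creates no new arcs among old points, and $f'|_T=f$, so $d_{f'}(u,w)=d_f(u,w)$; and (ii) for any $b\in T'$, $d_{f'}(b,\psi(b))\le\epsilon$, because either $b\in T$ and the value is $0$, or $b\in e$ and the $T'$-path from $b$ to $v$ lies in $e$, whose height is the oscillation of the $1$-Lipschitz function $f'$ over a Euclidean segment of length $\le\epsilon$. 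Combining (i), (ii) and the triangle inequality for the (pseudo)metrics $d_f$ on $T$ and $d_{f'}$ on $T'$ (inserting $\psi(b),\psi(\tilde b)$ as intermediate points, and vice versa) gives $\bigl|d_f(\psi(b),\psi(\tilde b))-d_{f'}(b,\tilde b)\bigr|\le 2\epsilon$ in both directions, so $\mathcal{D}(\phi,\psi)\le\epsilon$. Feeding the three bounds into the definition yields $d_{FD}(T,T')\le\max\{\epsilon,0,\epsilon\}=\epsilon$; the strict inequality of the statement then follows from the hypothesis that $e$ is strictly shorter than $\epsilon$ (and in any case the $\mathcal{O}(L\epsilon)$ bound used downstream is insensitive to this).

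I expect the main obstacle to be bookkeeping around the definition of $d_{FD}$ rather than any single estimate: one must check that $\psi$ is well defined and continuous, that $f'$ can be made monotone on edges so that $d_{f'}$ is genuinely the height (pseudo)metric of \cite{Bauer:2014}, and that all pairs in $G(\phi,\psi)$ — both in the graph of $\phi$, both in the graph of $\psi$, and mixed — are controlled uniformly. The reduction ``every element of $G(\phi,\psi)$ is of the form $(\psi(b),b)$'' is precisely the step that collapses this case analysis to a single triangle-inequality argument; the only other delicate point is the degenerate configuration of an edge that is exactly radial of length exactly $\epsilon$, where the norm term attains $\epsilon$ and the strict inequality has to be read against the ``length at most $\epsilon$'' hypothesis.
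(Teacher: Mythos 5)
Your proposal is correct and follows essentially the same route as the paper: the inclusion $\phi\colon T\hookrightarrow T'$ and the retraction $\psi$ collapsing $e$ to $v$, with $\|f-f'\circ\phi\|_\infty=0$, $\|f\circ\psi-f'\|_\infty\le\epsilon$, and a bound on $\mathcal{D}(\phi,\psi)$. The only differences are cosmetic: the paper states the sharper bound $\mathcal{D}(\phi,\psi)\le\epsilon/2$ (your triangle-inequality argument gives $\epsilon$, which still suffices), and the paper, like you, really proves $d_{FD}(T,T')\le\epsilon$ rather than the strict inequality in the statement.
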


\begin{proof}
Let $\phi: T \hookrightarrow T^{\prime}$ be the natural inclusion of $T$ in $T^{\prime}$, and let $\psi: T^{\prime} \to T$ be defined by:
\[ \psi(x) =
  \begin{cases}
    x      & \quad \text{if } x \notin e \\
    v & \quad \text{if } x \in e.\\
  \end{cases}
\]
By definition $f = f \circ \phi$. By the triangle inequality, $||f \circ \psi -f ||_{\infty} \leq \epsilon$. Furthermore since the length of the edge $e$ is at most $\epsilon$, $\mathcal{D}(\phi,\psi) \leq \epsilon/2.$ Therefore $d_{FD}(T,T^{\prime}) \leq \epsilon$.
\end{proof}

The proof above can immediately be generalised to the case in which finitely many edges, each of length at most $\epsilon$, are added to the tree. 

\begin{lemma}
Let $h \colon \mathbb{R}^3 \to \mathbb{R}^3$ be a homeomorphism such that the euclidean distance between $x$ and $h(x)$ is at most $\epsilon$ for any $x$ in $\mathbb{R}^3$.
If $f$ and $f^{\prime}$ are functions on $T$ and $h(T)$ that associate to each point of a rooted tree its euclidean distance to the root, then $d_{FD}(T, h(T)) < \epsilon$.
\end{lemma}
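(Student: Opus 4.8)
\textit{Proof proposal.} The plan is to feed the definition of $d_{FD}$ the obvious candidate pair of maps, namely the ambient homeomorphism and its inverse, and then estimate each of the three quantities in $d_{FD}(T,h(T))=\inf_{\phi,\psi}\max\{\mathcal{D}(\phi,\psi),\|f-f'\circ\phi\|_\infty,\|f\circ\psi-f'\|_\infty\}$ by hand. Concretely, I would set $\phi := h|_T\colon T\to h(T)$ and $\psi := h^{-1}|_{h(T)}\colon h(T)\to T$. Since $h$ is a homeomorphism of $\mathbb{R}^3$, the map $\phi$ is a homeomorphism onto its image $h(T)$, it satisfies $\psi=\phi^{-1}$, and $h(T)$ is again a rooted tree with root $h(R)$. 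In particular the graph set simplifies to $G(\phi,\psi)=\{(x,h(x))\mid x\in T\}$. It is worth noting up front that no regularity of $h$ beyond being a homeomorphism is needed, because $d_f$ and $d_{f'}$ only record function values along arcs, not metric arc–lengths, and a homeomorphism sends the unique arc between two points of a tree onto the unique arc between their images.

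For the two sup-norm terms I would argue pointwise: for $x\in T$ we have $f(x)=\|x-R\|$ and $f'(\phi(x))=\|h(x)-h(R)\|$, so
\[
|f(x)-f'(\phi(x))| \;=\; \bigl|\,\|x-R\|-\|h(x)-h(R)\|\,\bigr| \;\le\; \bigl\|(h(x)-x)-(h(R)-R)\bigr\| \;\le\; 2\epsilon,
\]
and $\|f\circ\psi-f'\|_\infty\le 2\epsilon$ follows by the substitution $y=h(x)$. For $\mathcal{D}(\phi,\psi)$, given $x,\tilde x\in T$ let $\lambda$ be the unique arc between them in $T$; then $h(\lambda)$ is the unique arc between $h(x)$ and $h(\tilde x)$ in $h(T)$, and the height of $h(\lambda)$ with respect to $f'$ equals the height of $\lambda$ with respect to $f'\circ h$. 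Since $f$ and $f'\circ h$ differ by at most $2\epsilon$ at every point of $T$, their maxima over $\lambda$ differ by at most $2\epsilon$ and likewise their minima, so the two heights differ by at most $4\epsilon$; hence $\tfrac12|d_f(x,\tilde x)-d_{f'}(h(x),h(\tilde x))|\le 2\epsilon$ and $\mathcal{D}(\phi,\psi)\le 2\epsilon$. Taking the maximum and then the infimum over all pairs gives $d_{FD}(T,h(T))\le 2\epsilon$, which through Theorem $4.2$ of \cite{Bauer:2014} and the upper/lower-diagonal decomposition of $D(T)$ recalled above already yields the bottleneck stability we are after.

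The step that genuinely controls the constant is the handling of the root. If one requires $h$ to fix the root $R$ (equivalently, if $f'$ on $h(T)$ is measured from $R$ rather than from $h(R)$), then $h(R)-R=0$ and every estimate above improves to a bound by $\epsilon$, giving exactly the stated inequality $d_{FD}(T,h(T))\le\epsilon$; with a moving root the displacement $\|h(R)-R\|\le\epsilon$ enters additively and one only gets $2\epsilon$. I expect this to be the main point to get right, and I would state the lemma with the root held fixed (or absorb the factor of $2$). A secondary, purely bookkeeping, issue is that the definition of $d_{FD}$ used here asks the defining function to be monotone on edges, whereas radial distance need not be; this is harmless, since one may first subdivide each edge of $T$, and of $h(T)$, at its point closest to the respective root, which alters neither the underlying space nor the functions, and since $\phi$ and $\psi$ are used only as continuous maps of topological spaces, not as cellular maps.
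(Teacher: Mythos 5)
Your proposal is correct and follows essentially the same route as the paper: take $\phi = h|_T$ and $\psi = h^{-1}|_{h(T)}$ and bound the three quantities in the definition of $d_{FD}$ directly. You go further than the paper in two useful ways. First, the paper simply asserts $\mathcal{D}(h,h^{-1})<\epsilon$ ``by the definition of $h$ and $g$''; your argument via the unique arc $\lambda$ and the observation that $d_{f'}(h(x),h(\tilde x))$ is the height of $\lambda$ with respect to $f'\circ h$, so that pointwise closeness of $f$ and $f'\circ h$ controls both the max and the min, is exactly the missing justification. Second, your point about the root is well taken: since the root of $h(T)$ is $h(R)$, the triangle inequality only yields $\|f-f'\circ h\|_\infty\le 2\epsilon$ in general, so the stated bound $d_{FD}(T,h(T))\le\epsilon$ really requires $h(R)=R$ (or one must settle for $2\epsilon$). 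The paper's proof claims the $\epsilon$ bound without that hypothesis, which is the same looseness the paper itself corrects in the branch-point-perturbation analysis, where the Lipschitz constant is explicitly doubled because the root also moves. Your subdivision remark, splitting each edge at its closest point to the root so that the radial distance becomes monotone on edges, is likewise a legitimate and harmless repair of a technicality the paper passes over. None of this changes the qualitative stability conclusion; it only affects the constant.
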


\begin{proof}
Since the euclidean distance between $x$ and $h(x)$ is at most $\epsilon$, the triangle inequality implies that $||f- f^{\prime} \circ h ||_{\infty}\leq \epsilon $. Let $g \colon \mathbb{R}^3 \to \mathbb{R}^3$ be the inverse to $h$. As before since the distance between $x$ and $g(x)$ is at most $\epsilon$, the triangle inequality implies that  $|| f \circ g - f^{\prime} ||_{\infty}\leq \epsilon$. 
By the definition of $h$ and $g$, it is also true that  $\mathcal{D}(h,g)< \epsilon$. It follows that $d_{FD}\big(T,h(T)\big) \leq \epsilon.$
\end{proof}

We have now proved that adding branches of length at most $\epsilon$ and moving the points of the tree $T$ in $\mathbb{R}^3$ by $\epsilon$ results in a tree which is $\epsilon$-close to $T$ in functional distortion distance. In this section we also showed that the bottleneck distance between persistence diagrams of trees is stable with respect to functional distortion distance.
Combining these two facts  guarantees that under the mentioned $\epsilon$-perturbations, the bottleneck distance between the persistence diagrams is at most $\epsilon$.

\section{Random Trees}  \label{sec:rtrees}
\subsection{Generation of random trees} \label{sec:genrtrees}

\begin{figure}[!ht]
 \centering
   \includegraphics[width=0.49\textwidth]{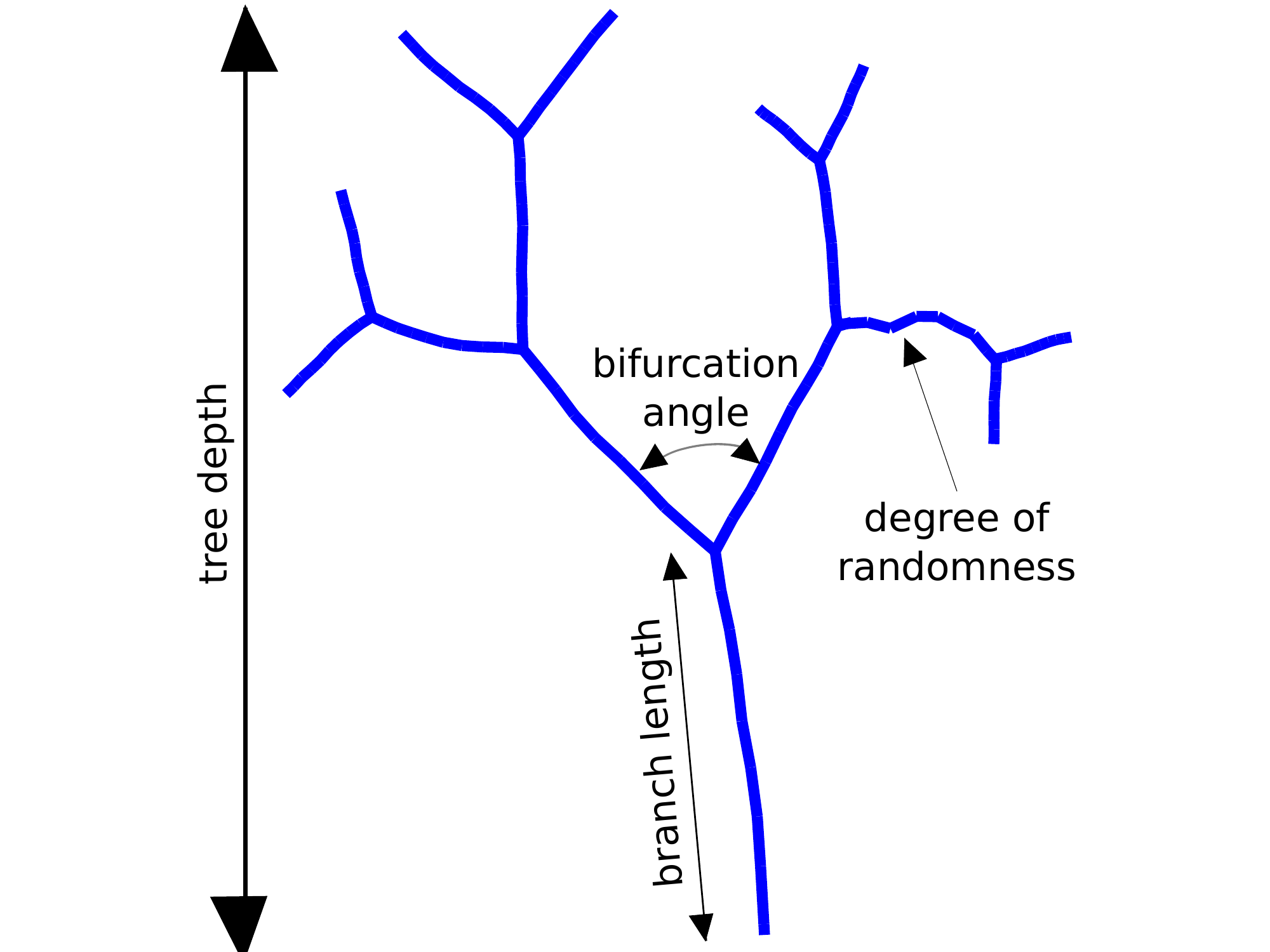}
 \caption{Definition of growth parameters for artificial random trees.}
 \label{fig:randomtree}
\end{figure}

The random trees that were used for testing the algorithm were generated with software developed within the Blue Brain Project (BBP). Each tree consists of branches, i.e., paths between two branch points, which are generated based on a random walk in $\mathbb{R}^3$. The position of the walk at each step is given as a weighted sum of a predefined direction $d_n$ and a simple random walk $X_n$:

$X_{n+1} = w_s \cdot \left((1-D_r) \cdot d_n + D_r \cdot X_n \right),$

\noindent where $w_s$ is the step size, and $D_r$ defines the randomness of each step. For $D_r=0$ the branch is a straight line, while for $D_r=1$ the branch is a simple random walk (SRW). The number of steps is given by the preselected branch length $B_l$. Once the number of steps is reached, the tree bifurcates, i.e., two new branches are created. The angle between the initial points of the branches is defined by the bifurcation angle $B_a$. The tree generated this way is binary, i.e., every leaf has the same depth, since new branches are added at every branch point until the preselected tree depth $T_d$ (i.e., the number of nodes from a leaf to the root node) is reached. The total number of branches in the tree is then $2^{T_d} - 1$. For example, the tree in Fig ~\ref{fig:randomtree} has $T_d=4$ and consists of $2^4 - 1 = 15$ branches.

This set of parameters $\{ T_d, B_l, B_a, D_r \}$ defines the global properties of the tree. Random trees that are generated with the same set of parameters share common morphometric properties, but have unique spatial structures, due to the stochastic component of the growth. This allowed us to check the effectiveness of the algorithm at identifying sets of trees that have been generated with the same input parameters $\{ T_d, B_l, B_a, D_r \}$ and that differ only in the random seed.




\subsection{Random trees results} \label{sec:rtresults}

We defined a control group as a set of trees generated with predefined parameters ($T_d=5$, $B_l=10$, $B_a=\pi/4$, $D_r=10\%$) but independent random seeds. Then, we varied each parameter individually to generate groups of trees that differed from the control group in only one property. For all trees we extracted the persistence barcode using Algorithm 1. A comparison of the distances $d_{Bar}$ between each tree's barcode and the barcodes of the trees in every groups constitutes one trial. The trial is successful if the minimum barcode distance occurs for a tree and the group generated with the same parameters. The overall accuracy of the TMD to separate groups of trees generated with different values for each of the described parameters is given in the following table:

\begin{center}
\begin{tabular}{| c | c | c | c |}
  \hline
  $T_d:$ ($4, 6, 8$)& $B_a:$ ($\frac{\pi}{4}, \frac{\pi}{2}, \pi $) & $B_l:$ ($5, 10, 30$) & $D_r:$ ($0.01, 0.10, 0.90$)\\
  \hline				
  $99 \pm 2 \%$ & $94 \pm 2 \%$ & $99 \pm 1 \%$ & $77 \pm 9 \%$\\
  \hline  
\end{tabular}
\end{center}

\section{Software availability}
\label{sec:soft}

\quad The software used for the extraction of persistence barcodes will be made available upon publication, under the BSD-3 license, more details in \url{ https://opensource.org/licenses/BSD-3-Clause}. The code will be accessible in \url{https://github.com/BlueBrain/Topaz}.

\section{Data provenance} \label{sec:data}
The artificial random trees used in Fig 1 and Fig 3 were generated by software developed in BBP and can be made available upon request.
The biological morphologies used in Fig 1, Fig 2 and Fig 5 are provided from LNMC, EPFL \cite{Romand:2011}. The biological morphologies used in Fig 4 were downloaded from http://neuromorpho.org/ . In particular, cat neurons were provided by \cite{Rose:1995}, dragonfly neurons by \cite{Gonzalez:2013}, drosophila neurons by \cite{Chiang2011}, mouse neurons by \cite{Badea:2011} and rat neurons by \cite{Romand:2011}.

\bibliographystyle{unsrt} 
\bibliography{Topaz_paper} 

\end{document}